\newcommand{\beq}{\begin{equation}}
\newcommand{\enq}{\end{equation}}
\newcommand{\bel}{\begin{lemma}}
\newcommand{\enl}{\end{lemma}}
\newcommand{\bet}{\begin{theorem}}
\newcommand{\ent}{\end{theorem}}
\newcommand{\eps}{\varepsilon}
\newcommand*{\cA}{\mathcal{A}}
\newcommand*{\cX}{\mathcal{X}}
\newcommand*{\cB}{\mathcal{B}}
\newcommand*{\cY}{\mathcal{Y}}
\newcommand*{\cM}{\mathcal{M}}
\newcommand*{\bX}{\mathbf{X}}
\newcommand*{\bY}{\mathbf{Y}}
\newcommand*{\by}{\mathbf{y}}
\newcommand*{\bx}{\mathbf{x}}
\newcommand*{\bP}{\mathbf{P}}
\newcommand*{\bQ}{\mathbf{Q}}
\newcommand*{\Ib}{\bar{I}}
\newcommand*{\hX}{\hat{X}}
\newcommand*{\oH}{\overline{H}}
\mathchardef\mhyphen="2D
\newcommand*{\renyi}{R\'{e}nyi }
\newcommand*{\verdu}{V\'{e}rdu }
\newcommand*{\denc}{\mathrm{d-enc}}
\newtheorem{definition}{Definition}
\newtheorem{theorem}{Theorem}
\newtheorem{lemma}{Lemma}
\begin{document}
\title{One-shot source coding with coded side information available at the decoder}
\author{
\authorblockN{Naqueeb~Ahmad~Warsi}
\authorblockA{Tata Institute of Fundamental Research\\
Homi Bhabha Road, Mumbai 400005\\
Email: naqueeb@tifr.res.in}}
\maketitle
\begin{abstract}
One-shot achievable rate region for source coding when coded side information is available at the decoder (source coding with a helper) is proposed. The achievable region proposed is in terms of  conditional smooth max \renyi entropy and smooth max \renyi divergence. Asymptotically (in the limit of large block lengths) this region is quantified in terms of spectral-sup conditional entropy rate and spectral-sup mutual information rate. In particular, it coincides with the rate region derived in the limit of unlimited independent and identically distributed copies of the sources.
\end{abstract}

\section{Introduction}
The derivation of most of the fundamental results in information theory relies on the assumption that a random experiment is repeated identically and independently for large enough time. However, in practical scenarios both of these assumptions are not always justifiable. To overcome the limitations posed by these assumptions Renner et al. introduced the notion of \emph{one-shot} information theory. One-shot information theory relies on the fact that a random experiment is available only \emph{once}. Thus removing both the assumptions together.

The first one-shot bounds were given for the task of one-shot source coding \cite{renner-wolf-2004}. These bounds were based on smooth \renyi entropies. The notion of smooth \renyi entropies were introduced for the very first time in the same work, i.e., in Ref. \cite{renner-wolf-2004}. The elegance of the one-shot bounds obtained in Ref. \cite{renner-wolf-2004} is that these bounds coincide with the Shannon entropy \cite{covertom} of the information source in the limit of unlimited independent and identically distributed (i.i.d.) copies of the source. Furthermore, these bounds coincide with spectral sup-entropy rate as defined by Han and \verdu in Ref. \cite{han-verdu-1993} in the limit of unlimited arbitrarily distributed copies of the source. One-shot bounds for distributed source coding were given by Sharma et al. in \cite{sharma-warsi-2011}. In \cite{renner-isit-2009} Wang et al. gave one-shot bounds for the channel coding problem in terms of smooth min \renyi divergence. 

There has been a considerable work on the one-shot bounds for the quantum case under various scenarios (see for
example Refs. \cite{datta-renner-2009, konig-op-mean-2009,dupuis-broadcast-2010,berta-reverse-shannon-2011,
datta-fqsw-2011,renes-renner-2011} and references therein).

In this work we give one-shot achievable rate region for source coding  when coded state side information is available at the decoder. The achievable rate region derived for this problem is in terms of smooth max \renyi divergence and conditional smooth max \renyi entropy. The notion of smooth max \renyi divergence was introduced by Datta for the quantum case in \cite{Datta-2009}. We further show that the achievable region obtained asymptotically coincides with the rate region derived in \cite{wyner-1975}.

The rest of this paper is organized as follows. In Section \ref{notation} we  
discuss the notations which we will be using throughout this paper. In Section \ref{div} we give the definitions of of smooth conditional \renyi entropy of order zero and smooth max \renyi divergence. We then prove two lemmas pertaining to the properties of smooth max \renyi divergence. Although, the proof of Lemma \ref {asyminfinityrate} is known in the quantum case we give a totally different proof. In particular, our proof involves more straight forward arguments. In Section \ref{sideinf} we state and prove the achievable region for source coding problem when coded side information is available at the decoder. 
\section{Notations}
\label{notation}
In the discussions below we will be using $X$ to represent a random variable. We will assume that all the random variables are discrete and have finite range. We represent a random sequence of length $n$ by $X^n$ and a particular realization of $X^n$ by $\bx$. Notation $\bX$ will be used to represent an arbitrary sequence of random variables, i.e., $\bX = \{X_n\}_{n=1}^{\infty}$. We use the notation $|\cdot|$ to represent the cardinality of a set. The set $\{\bx : P_{X^n}(\bx)>0\}$ is denoted by $\mbox{Supp} (P_{X^n})$. We use the notation
\beq
X\rightarrow Y \rightarrow Z \nonumber
\enq
to denote the fact that random variables $X$, $Y$ and $Z$ form a Markov chain. We represent the following set of real numbers
\beq
\{x : 0\leq x < \infty\} \nonumber
\enq
by $\mathbb{R}^+$. $\cX \times \cY$ will represent the cartesian product of two sets. Similarly $(\cX \times \cY)^n$ will represent the $n\mhyphen$th Cartesian product of the set $\cX \times \cY$. The notation $\mathbb{N}$ is used to represent the set of natural numbers. Throughout this paper we will assume that $\log$ is to the base $2$.
\section{Smooth \renyi divergence of order infinity and conditional smooth \renyi entropy of order zero}
\label{div}

\begin{definition}(Max \renyi entropy \cite{renyi-1960})
Let $X \sim P_{X}$, with range $\cX$. The zero order \renyi entropy of $X$ is defined as
\beq
H_{0}(X) := \log \mbox{Supp}(P_X). \nonumber
\enq
\end{definition}
\begin{definition}(Conditional smooth max \renyi entropy \cite{renner-wolf-2005})
\label{cond}
Let $(X,Y) \sim P_{XY}$, with range $\cX \times \cY$. For $\eps \geq 0$, the conditional smooth \renyi entropy of order zero of $X$ given $Y$ is defined as
\begin{align*}
H_{0}^{\eps} (X|Y) := \min_{Q \in \cB^{\eps}(P_{XY})} \log\max_{y\in\cY} |\mbox{Supp}(Q(X|Y =y))|, \nonumber
\end{align*}
where $\cB^{\eps}(P_{XY}) = \{Q: \sum_{x,y \in \cX \times \cY} Q(x,y) \geq 1 -\eps, \forall ~(x,y) \in \cX \times \cY,~ P_{XY}(x,y)\geq Q(x,y) \geq 0\}$ and $Q(X=x|Y=y):=\frac{Q(x,y)}{P_{Y}(y)}$, for any $x \in \cX$ and $y \in \cY$. With the convention that $Q(X=x|Y=y) := 0$ if $P_{Y}(y) = 0$.
\end{definition}

\begin{definition}(Max \renyi divergence \cite{renyi-1960})
Let $P$ and $Q$ be two probability mass functions on the set $\cX$ such that $\mbox{Supp}(P) \subseteq \mbox{Supp}(Q)$. The max \renyi divergence between $P$ and $Q$ is defined as 
\beq
D_{\infty}(P||Q) :=  \log \max_{x : P(x)>0} \frac{P(x)}{Q(x)}. \nonumber
\enq
\end{definition}
\begin{definition}(Smooth max \renyi divergence)
\label{smoothorderinf} Let $P$ and $Q$ be two probability mass functions on the set $\cX$ such that $\mbox{Supp}(P) \subseteq \mbox{Supp}(Q)$.  The smooth max \renyi divergence between $P$ and $Q$ for $\eps \in [0,1)$ is defined as
\beq
D^{\eps}_{\infty}(P||Q) := \log \inf_{\phi \in \cB^{\eps}(P)} \max_{x : P(x)>0} \frac{\phi(x)}{Q(x)}, \nonumber
\enq
where
\begin{align*}
 \cB^{\eps}(P) = \bigg\{&\phi : 0\leq \phi(x) \leq P(x),\forall x \in \cX~\mbox{and}\\
 &\sum_{x\in \cX} \phi(x) \geq 1-\eps\bigg\}.
 \end{align*}
 Notice that $D^{\eps}_{\infty}(P||Q)$ is a non increasing function of $\eps$. 
\end{definition}
\begin{lemma} (Datta and Renner \cite{datta-renner-2009})
\label{assymcond}
Let $(\bX, \bY) = \{(X_n,Y_n)\}_{n=1}^{\infty}$ be an arbitrary random  sequence taking values over the set $\{(\cX\times\cY)^n\}_{n=1}^{\infty}$,
where $(\cX\times\cY)^n$ is the $n\mhyphen$th Cartesian product of $\cX\times \cY$. Then 
\beq
\lim_{\eps\to 0}\limsup_{n\to\infty} \frac{H_{0}^\eps(X^n|Y^n)}{n} = \oH(\bX|\bY), \nonumber
\enq
where 
\begin{align}
&\oH(\bX|\bY) :=\nonumber\\
& \mbox{inf}\bigg\{\alpha\big|\liminf_{n\to\infty}\Pr\bigg\{\frac{1}{n} \log \frac{1}{P_{X^n|Y^n}(X^n|Y^n)} \leq \alpha\bigg\} = 1 \bigg\}. \nonumber
\end{align}
$\oH(\bX|\bY)$ is called the spectral-sup conditional entropy rate of $\bX$ given $\bY$ \cite{han}. In particular, if $(\bX, \bY) = \{(X_n,Y_n)\}_{n=1}^{\infty}$ is a random sequence of independent and identically distributed random pairs distributed according to $P_{XY}$ then
\beq
\lim_{\eps\to 0}\limsup_{n\to\infty} \frac{H_{0}^\eps(X^n|Y^n)}{n} = H(X|Y).\nonumber
\enq
\end{lemma}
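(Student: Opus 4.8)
The plan is to prove the claimed equality by establishing the two inequalities $\lim_{\eps\to 0}\limsup_{n\to\infty}\frac1n H_0^\eps(X^n|Y^n)\le\oH(\bX|\bY)$ and $\ge\oH(\bX|\bY)$ separately via an information–spectrum (truncation) argument, and then to obtain the i.i.d.\ statement as an immediate corollary of the weak law of large numbers. Throughout, observe that since $H_0^\eps(X^n|Y^n)$ is non-increasing in $\eps$, the quantity $\limsup_{n\to\infty}\frac1n H_0^\eps(X^n|Y^n)$ is non-decreasing as $\eps\downarrow 0$, so the outer limit exists and equals $\sup_{\eps>0}\limsup_{n\to\infty}\frac1n H_0^\eps(X^n|Y^n)$.

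\emph{Upper bound.} Fix $\alpha>\oH(\bX|\bY)$ and set $T_n:=\{(\bx,\by): P_{X^n|Y^n}(\bx|\by)\ge 2^{-n\alpha}\}$. By the definition of $\oH(\bX|\bY)$, $\liminf_{n}\Pr\{(X^n,Y^n)\in T_n\}=1$, so for any $\eps>0$ there is $N$ with $\Pr\{(X^n,Y^n)\in T_n\}\ge 1-\eps$ for all $n\ge N$. Let $Q_n$ be the restriction of $P_{X^nY^n}$ to $T_n$, i.e.\ $Q_n(\bx,\by):=P_{X^nY^n}(\bx,\by)\,\mathbf{1}[(\bx,\by)\in T_n]$; then $Q_n\in\cB^\eps(P_{X^nY^n})$. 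For every $\by$ with $P_{Y^n}(\by)>0$, each $\bx\in\mbox{Supp}(Q_n(X^n|Y^n=\by))$ satisfies $P_{X^n|Y^n}(\bx|\by)\ge 2^{-n\alpha}$, and since these conditional probabilities sum to at most $1$ there are at most $2^{n\alpha}$ of them. Hence $H_0^\eps(X^n|Y^n)\le n\alpha$ for $n\ge N$, so $\limsup_{n}\frac1n H_0^\eps(X^n|Y^n)\le\alpha$ for every $\eps>0$; letting $\eps\to0$ and then $\alpha\downarrow\oH(\bX|\bY)$ gives the inequality $\le$.

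\emph{Lower bound.} Fix $\alpha'<\oH(\bX|\bY)$. By the definition of $\oH(\bX|\bY)$ as an infimum, $\liminf_{n}\Pr\{\frac1n\log\frac1{P_{X^n|Y^n}(X^n|Y^n)}\le\alpha'\}<1$, so there are $\delta>0$ and a subsequence $(n_k)$ along which $\Pr\{(X^{n_k},Y^{n_k})\in B_{n_k}\}\ge\delta$, where $B_n:=\{(\bx,\by):P_{X^n|Y^n}(\bx|\by)<2^{-n\alpha'}\}$. Put $\eps:=\delta/2$. Given any $Q_n\in\cB^\eps(P_{X^nY^n})$, since $Q_n\le P_{X^nY^n}$ pointwise and $\sum Q_n\ge 1-\eps$ we get $\sum_{(\bx,\by)\in B_n}Q_n(\bx,\by)\ge\delta-\eps$ along $(n_k)$. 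On the other hand, with $M:=\max_\by|\mbox{Supp}(Q_n(X^n|Y^n=\by))|$, for each $\by$ with $P_{Y^n}(\by)>0$ the inner sum $\sum_{\bx:(\bx,\by)\in B_n}Q_n(\bx,\by)$ has at most $M$ nonzero terms, each strictly below $P_{Y^n}(\by)2^{-n\alpha'}$, so summing over $\by$ yields $\sum_{(\bx,\by)\in B_n}Q_n(\bx,\by)\le M2^{-n\alpha'}$. Therefore $M\ge(\delta-\eps)2^{n\alpha'}$ along $(n_k)$, hence $\frac1{n_k}H_0^\eps(X^{n_k}|Y^{n_k})\ge\alpha'+\frac1{n_k}\log(\delta-\eps)$, which gives $\limsup_{n}\frac1n H_0^\eps(X^n|Y^n)\ge\alpha'$. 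Since $\limsup_{n}\frac1n H_0^{\eps'}(X^n|Y^n)$ is non-decreasing as $\eps'\downarrow 0$, we conclude $\lim_{\eps'\to0}\limsup_{n}\frac1n H_0^{\eps'}(X^n|Y^n)\ge\alpha'$ for every $\alpha'<\oH(\bX|\bY)$, hence $\ge\oH(\bX|\bY)$.

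\emph{I.i.d.\ case and main obstacle.} When $(\bX,\bY)$ is i.i.d.\ $\sim P_{XY}$, $\frac1n\log\frac1{P_{X^n|Y^n}(X^n|Y^n)}=\frac1n\sum_{i=1}^n\log\frac1{P_{X|Y}(X_i|Y_i)}$ converges in probability to $H(X|Y)$ by the weak law of large numbers, so $\Pr\{\frac1n\log\frac1{P_{X^n|Y^n}(X^n|Y^n)}\le\alpha\}\to 1$ for $\alpha>H(X|Y)$ and $\to 0$ for $\alpha<H(X|Y)$; thus $\oH(\bX|\bY)=H(X|Y)$ and the i.i.d.\ claim follows from the general statement. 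The delicate step is the lower bound: the smoothing parameter $\eps$ must be chosen as a function of the spectral gap $\oH(\bX|\bY)-\alpha'$ (through $\delta$), one must pass to a subsequence, and one must be careful that $M$ bounds the support size \emph{uniformly} over all $\by$; the upper bound and the i.i.d.\ reduction are routine.
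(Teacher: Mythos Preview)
Your proof is correct. Note, however, that the paper does not give its own proof of this lemma: it is stated with attribution to Datta and Renner \cite{datta-renner-2009} and invoked as a known result, so there is nothing to compare your argument against directly. That said, your information-spectrum truncation argument (restrict $P_{X^nY^n}$ to the set where $\frac{1}{n}\log\frac{1}{P_{X^n|Y^n}}\le\alpha$ for the upper bound; for the lower bound, show that any admissible $Q_n$ must place at least $\delta-\eps$ mass on the ``bad'' set $B_n$, then count) is exactly the technique the paper itself deploys to prove the analogous statement for the smooth max \renyi divergence in Lemma~\ref{asyminfinityrate}. The only cosmetic difference is that the paper works along all $n$ large via the $\liminf/\limsup$ relations rather than explicitly extracting a subsequence, but the substance is identical.
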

\begin{lemma}
\label{consopt}
Let $P$ and $Q$ be two probability mass functions defined on the set $\cX$, where $\mbox{Supp}(P) \subseteq \mbox{Supp}(Q)$ and $|\cX|< \infty$. There exists $\phi \in \cB^{\eps}(P)$ such that 
\beq
\label{maxinfsm}
D^{\eps}_{\infty}(P||Q) = \log  \max_{x \in \cX} \frac{\phi(x)}{Q(x)}. 
\enq 
\end{lemma}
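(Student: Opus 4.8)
The plan is to recognize that this is a compactness statement: the smoothing set $\cB^{\eps}(P)$ is a closed and bounded subset of $\mathbb{R}^{|\cX|}$, and the functional being optimized is continuous on it, so the infimum defining $D^{\eps}_{\infty}(P\|Q)$ is actually attained. I would first check that $\cB^{\eps}(P)$ is nonempty (take $\phi = P$, since $\sum_x P(x) = 1 \geq 1-\eps$) and that it is compact: the constraints $0 \leq \phi(x) \leq P(x)$ confine $\phi$ to a box in $\mathbb{R}^{|\cX|}$, hence $\cB^{\eps}(P)$ is bounded, and since all the defining constraints (the $|\cX|$ box constraints together with $\sum_{x} \phi(x) \geq 1-\eps$) are non-strict inequalities involving continuous (indeed linear) functions of $\phi$, the set $\cB^{\eps}(P)$ is closed. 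Finiteness of $\cX$ is exactly what makes $\mathbb{R}^{|\cX|}$ finite-dimensional so that closed and bounded gives compact.

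Next I would argue that the map $f(\phi) := \max_{x : P(x)>0} \tfrac{\phi(x)}{Q(x)}$ is continuous on $\cB^{\eps}(P)$. Since $\mbox{Supp}(P) \subseteq \mbox{Supp}(Q)$, we have $Q(x) > 0$ for every $x$ with $P(x) > 0$, so each term $\phi(x)/Q(x)$ is a continuous (linear) function of $\phi$; a maximum of finitely many continuous functions is continuous. Applying the extreme value theorem, $f$ attains its minimum over the compact set $\cB^{\eps}(P)$ at some point $\phi^{\ast} \in \cB^{\eps}(P)$, so that $\log f(\phi^{\ast}) = \log \inf_{\phi \in \cB^{\eps}(P)} f(\phi) = D^{\eps}_{\infty}(P\|Q)$.

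Finally I would reconcile the index set in the statement, where the maximum is written over all of $\cX$ rather than over $\{x : P(x) > 0\}$: for any $\phi \in \cB^{\eps}(P)$ and any $x$ with $P(x) = 0$ we have $0 \leq \phi(x) \leq P(x) = 0$, hence $\phi(x) = 0$ and the corresponding ratio contributes $0$ (with the usual convention $0/Q(x) = 0$, including when $Q(x) = 0$), which does not change the value of the maximum since $f(\phi^{\ast}) \geq 0$. Hence $D^{\eps}_{\infty}(P\|Q) = \log \max_{x \in \cX} \tfrac{\phi^{\ast}(x)}{Q(x)}$, which is the claimed identity. I do not anticipate a genuine obstacle here; the only points requiring care are verifying that the feasible region is \emph{closed} (which holds precisely because the smoothing constraints are non-strict) and handling the degenerate indices $x$ with $P(x) = 0$ in the statement's maximum.
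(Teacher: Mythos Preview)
Your compactness argument is correct and is the cleanest way to establish existence of a minimizer: nonemptiness, closedness, and boundedness of $\cB^{\eps}(P)$ in $\mathbb{R}^{|\cX|}$ are all immediate from the non-strict linear constraints, and continuity of $\phi \mapsto \max_{x:P(x)>0}\phi(x)/Q(x)$ follows since $Q(x)>0$ on $\mbox{Supp}(P)$ and the maximum is over a finite index set. Your handling of the indices with $P(x)=0$ is also fine.

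The paper, however, takes a different and more constructive route. It builds the optimizer $\phi$ explicitly via a water-filling-type procedure: starting from $\phi=P$, one repeatedly lowers the coordinates $\phi_i$ that currently achieve the largest ratio $\phi_i/Q_i$, keeping those top ratios equal, until either they meet the next-largest ratio (and the process repeats with an enlarged top set) or the total mass removed reaches $\eps$. Optimality is then argued by contradiction: any competitor $\phi'$ achieving a strictly smaller maximum ratio would have to lie strictly below $\phi$ on every coordinate that was touched, which would force $\sum_i(P_i-\phi'_i)>\eps$. What the paper's approach buys is structural information about the optimizer beyond mere existence: in particular, the procedure never drives any $\phi_i$ to zero when $\eps\in[0,1)$, so $\mbox{Supp}(\phi)=\mbox{Supp}(P)$. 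This support-preservation property is stated as a remark after the lemma and is actually used later in the proof of Theorem~\ref{oneshotsideinf} (step $b$ in the bound on $\Pr\{E_1\}$). Your abstract compactness argument does not deliver this for free; if you want to match what the paper needs downstream, you would have to supplement your existence proof with a separate argument that the minimizer can be taken to have full support on $\mbox{Supp}(P)$.
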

\begin{proof}
Without loss of generality we assume here that $\cX \subset \mathbb{N}$.
We construct the  $\phi_{i}$s by decreasing the $P_i$s such that the total decrease is $\eps$, i.e., $\sum_{i \in \cX}(P_{i}-\phi_i )= \eps$. The following procedure achieves the above.
\begin{enumerate}[Step 0 $\rightarrow$]
\item (Initialization) $\phi_i = P_i~ \forall i \in \cX $.
\end{enumerate}
\begin{enumerate}[Step 1 $\rightarrow$]
\item Let $r_1, r_2$ be the largest and the second largest ratios in 
\beq
\cA = \bigg\{\frac{\phi_i}{q_i} : i \in \cX\bigg\}
\enq
respectively. Let I be the collection of all $i$s that have the highest ratio, i.e., 
 \beq
I = \bigg\{ i \in \cX: \frac{\phi_i}{Q_i} = r_1\bigg\}.
\enq
If $|\cA| =1$ then notice that $I = \cX$. In this case start decreasing all $\phi_i$s where $i \in \cX$ such that all the indices continue to have constant ratio, i.e., $\frac{\phi_i}{Q_i} = \frac{\phi_j}{q_j} ~\forall i,j \in \cX$. Continue this process until we run out of $\eps$, i.e., $\sum_{i \in \cX}(P_i- \phi_i) = \eps$ in which case end the procedure. Else go to step $2$.
\end{enumerate}
\begin{enumerate}[Step 2 $\rightarrow$]
\item We start decreasing all $\phi_i$s where $i \in I$ such that indices in $I$ continue to have the highest ratio, i.e., $\frac{\phi_i}{Q_i} = \frac{\phi_j}{q_j} ~\forall i,j \in I$. As a result, $r_1$ will start decreasing. Continue decreasing till either 
\begin{enumerate}
\item [Case $1$:] $r_1$ hits $r_2$, i.e., $r_1 = r_2$ in which case stop decreasing any further. Goto step $1$. Or \\
\item [Case $2$:] we run out of $\eps$, i.e., $\sum_{i \in I}(P_i- \phi_i) = \eps$ in which case end the procedure.\\
\end{enumerate}
\end{enumerate}
We claim that the $\phi$ constructed by the above procedure is such that
\beq
\label{c1}
\log  \max_{x \in \cX} \frac{\phi(x)}{Q(x)} = D^{\eps}_{\infty}(P||Q).
\enq 
We give a proof by contradiction to prove \eqref{c1}. Let $\phi^\prime \in \cB^{\eps}(P)$ be the output of some other procedure such that
\beq
\label{contrad1222}
\log  \max_{x \in \cX} \frac{\phi(x)}{Q(x)}  > \log  \max_{x \in \cX} \frac{\phi^\prime(x)}{Q(x)}. 
\enq
Let $\hat{\cA}= \big\{i\in \cX : \phi_i < P_i \big\}$. Notice that for every $i,j \in \hat{\cA}$
\beq
\frac{\phi_i}{Q_i} = \frac{\phi_j}{Q_j}. \nonumber
\enq
It is easy to observe that for \eqref{contrad1222} to hold $\phi^\prime$ must satisfy the following
\beq
\label{np}
\phi^\prime_i<\phi_i ,~\forall i \in \cX.
\enq
However, this is not possible because this new procedure will not have enough $\eps$ to accomplish \eqref{np}, i.e.,
\beq
\sum_{i \in \hat{\cA}}(P_i- \phi^\prime_i) > \eps. \nonumber
\enq 
\end{proof}
{\bf{Remark:}} It is easy to observe from the proof of Lemma \ref{consopt} that for $\eps \in [0,1)$,
\beq
\label{equalsupp}
\mbox{Supp}(\phi) = \mbox{Supp}(P),
\enq
where $\phi$ satisfies \eqref{maxinfsm}.

\begin{lemma}
\label{asyminfinityrate}
Let $\bP = \{P_n\}_{n=1}^{\infty}$ and $\bQ = \{Q_n\}_{n=1}^{\infty}$ be an arbitrary sequences of probability mass functions defined on the set $\{\cX^n\}_{n=1}^{\infty}$, where $\cX^n$ is the $n\mhyphen$th cartesian product of the set $\cX$ and $|\cX| < \infty$. Assume that for every $n \in \mathbb{N}$, $\mbox{Supp}(P_n) \subseteq \mbox{Supp}(Q_n)$. Then
\beq
\label{noniid}
\lim_{\eps \to 0} \limsup_{n \to \infty} \frac{1}{n}D^{\eps}_{\infty}(P_n||Q_n) = \Ib(\bP;\bQ),
\enq
where 
\begin{align}
\label{spectralsupmutual}
\Ib&(\bP;\bQ):= \mbox{inf}\left\{\alpha\big|\liminf_{n\to\infty}\Pr\left\{\frac{1}{n}\log \frac{P_n}{Q_n} \leq \alpha\right\} = 1 \right\}.
\end{align}
$\Ib(\bP;\bQ)$ is called the spectral sup-mutual information rate between $\bP$ and $\bQ$ \cite{han}. In particular, if $\bP = \{P^{\times n}\}_{n =1}^{\infty}$ and $\bQ = \{Q^{\times n}\}_{n =1}^{\infty}$, where $P^{\times n}$ and $Q^{\times n}$ represent the product distributions of $P$ and $Q$ on $\cX^n$. Then
\beq
\label{iid}
\lim_{\eps \to 0} \limsup_{n \to \infty} \frac{1}{n}D^{\eps}_{\infty}(P_n||Q_n)=D(P||Q).
\enq
\end{lemma}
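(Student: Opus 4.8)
The plan is to prove the two inequalities in \eqref{noniid} separately and then to deduce the i.i.d.\ statement \eqref{iid} from the law of large numbers. For a real number $R$ I would write $\cT_n(R):=\{x\in\cX^n : P_n(x)>2^{nR}Q_n(x)\}$, so that $P_n(\cT_n(R))=\Pr\{\tfrac1n\log\tfrac{P_n}{Q_n}>R\}$ and, since $Q_n(x)<2^{-nR}P_n(x)$ on $\cT_n(R)$, also $\sum_{x\in\cT_n(R)}Q_n(x)<2^{-nR}$.

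For the achievability direction ($\le$), I would fix $R>\Ib(\bP;\bQ)$, so that by \eqref{spectralsupmutual} $\eps_n:=P_n(\cT_n(R))\to 0$, and take $\phi_n$ equal to $P_n$ off $\cT_n(R)$ and equal to $0$ on $\cT_n(R)$. Then $\phi_n\in\cB^{\eps_n}(P_n)$ and every ratio $\phi_n(x)/Q_n(x)$ is at most $2^{nR}$, so $D^{\eps_n}_\infty(P_n\|Q_n)\le nR$. Using that $D^\eps_\infty$ is non-increasing in $\eps$, for each fixed $\eps>0$ and all large $n$ this gives $\tfrac1n D^\eps_\infty(P_n\|Q_n)\le R$; then $n\to\infty$, $\eps\to 0$ and $R\downarrow\Ib(\bP;\bQ)$ close this direction.

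For the converse direction ($\ge$), I would fix $R<\Ib(\bP;\bQ)$; by \eqref{spectralsupmutual} we then have $\limsup_n P_n(\cT_n(R))>0$, so there are $\delta>0$ and a subsequence $(n_k)$ with $P_{n_k}(\cT_{n_k}(R))\ge\delta$. Fix $\eps\in(0,\delta)$ and, invoking Lemma \ref{consopt}, let $\phi$ attain the infimum defining $D^\eps_\infty(P_{n_k}\|Q_{n_k})$. From $0\le\phi\le P_{n_k}$ and $\sum_x\phi(x)\ge 1-\eps$ I would get $\sum_{x\in\cT_{n_k}(R)}\phi(x)\ge P_{n_k}(\cT_{n_k}(R))-\eps\ge\delta-\eps>0$, and combine this with $\sum_{x\in\cT_{n_k}(R)}Q_{n_k}(x)<2^{-n_kR}$ via the elementary bound $\max_i a_i/b_i\ge(\sum_i a_i)/(\sum_i b_i)$ (valid for $b_i>0$) to obtain $\max_{x}\phi(x)/Q_{n_k}(x)\ge(\delta-\eps)2^{n_kR}$, i.e.\ $\tfrac1{n_k}D^\eps_\infty(P_{n_k}\|Q_{n_k})\ge R+\tfrac1{n_k}\log(\delta-\eps)$. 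Sending $k\to\infty$ gives $\limsup_n\tfrac1n D^\eps_\infty(P_n\|Q_n)\ge R$ for all $\eps\in(0,\delta)$, after which $\eps\to 0$ and $R\uparrow\Ib(\bP;\bQ)$ finish \eqref{noniid}. I expect this converse estimate to be the crux: the optimal $\phi$ can shave its mass off $\cT_n(R)$ unevenly, so a pointwise comparison on $\cT_n(R)$ breaks down and one really needs the aggregate (summed) ratio inequality.

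Finally, for \eqref{iid} I would note that with $P_n=P^{\times n}$ and $Q_n=Q^{\times n}$, $\tfrac1n\log\tfrac{P_n(X^n)}{Q_n(X^n)}=\tfrac1n\sum_{i=1}^n\log\tfrac{P(X_i)}{Q(X_i)}$ tends to $D(P\|Q)$ in probability by the weak law of large numbers, the summands having finite mean because $|\cX|<\infty$ and $\mbox{Supp}(P)\subseteq\mbox{Supp}(Q)$. Hence $\Pr\{\tfrac1n\log\tfrac{P_n}{Q_n}\le\alpha\}\to 1$ for $\alpha>D(P\|Q)$ and $\to 0$ for $\alpha<D(P\|Q)$, so $\Ib(\bP;\bQ)=D(P\|Q)$ and \eqref{iid} is immediate from \eqref{noniid}.
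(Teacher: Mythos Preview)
Your proof is correct and follows essentially the same approach as the paper: both directions use the same truncation $\phi_n=P_n\cdot\mathbf{1}_{\cT_n(R)^c}$ for the upper bound, and for the lower bound both rely on the inequality $\max_x \phi(x)/Q_n(x)\ge \big(\sum_{\cT_n(R)}\phi\big)\big/\big(\sum_{\cT_n(R)}Q_n\big)$ together with $\sum_{\cT_n(R)}\phi\ge P_n(\cT_n(R))-\eps$ and $\sum_{\cT_n(R)}Q_n<2^{-nR}$. The only cosmetic differences are that you extract a subsequence explicitly (the paper argues directly with $\limsup/\liminf$) and invoke Lemma~\ref{consopt} for an optimizer (unnecessary, since your bound holds for every $\phi\in\cB^{\eps}(P_{n_k})$ and hence for the infimum).
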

\begin{proof}
We will first prove 
\beq
\lim_{\eps \to 0} \limsup_{n \to \infty} \frac{1}{n}D^{\eps}_{\infty}(P_n||Q_n) \leq \Ib(\bP;\bQ). \nonumber
\enq
Consider any $\lambda > \Ib(\bP;\bQ)$. Let us define the following set 
\beq
\label{consset}
\cA_{n}(\lambda) := \left\{\bx : \frac{1}{n} \log \frac{P_n(\bx)}{Q_n(\bx)} \leq \lambda\right\}. 
\enq
Let $\phi_{n}: \cX^n \to [0,1]$, $n \in \mathbb{N}$, such that 
\beq
\label{constsupinfrate}
\phi_n(\bx) =
\begin{cases}
P_{n}(\bx) & \mbox{if } \bx \in \cA_{n}(\lambda), \\
 0      & \mbox{otherwise.}
\end{cases}
\enq
From  \eqref{spectralsupmutual} it easily follows that 
\beq 
\lim_{n \to \infty} \Pr\{\cA_{n}(\lambda)\} =1.
\enq
Thus from our  construction of $\phi_n$, \eqref{constsupinfrate}, it follows that
\beq
\label{constsminf}
\lim_{n \to \infty}\sum_{\bx \in \cX^n} \phi_n(\bx) = \lim_{n \to \infty} \Pr\{\cA_{n}(\lambda)\} =1.
\enq 
Using \eqref{constsupinfrate} and \eqref{constsminf} observe that for $n$ large enough 
\begin{align*}
\lim_{\eps \to 0} \limsup_{n \to \infty}\frac{1}{n}D^{\eps}_{\infty}(P_n||Q_n) & \leq \limsup_{n \to \infty} \frac{1}{n} \log \max_{\bx \in \cA_{n}(\lambda)} \frac{\phi_n(\bx)}{Q_{n}(\bx)}\\
&\overset{a} \leq \lambda,
\end{align*}
where $a$ follows from \eqref{consset} and \eqref{constsupinfrate}.

We now prove the other side, i.e., 
\beq
\lim_{\eps \to 0} \limsup_{n \to \infty} \frac{1}{n}D^{\eps}_{\infty}(P_n||Q_n) \geq \Ib(\bP;\bQ). \nonumber
\enq
Consider any $\gamma < \Ib(\bP;\bQ)$. For every $n \in \mathbb{N}$, let us define the following the set 
\beq
\label{conssetasym0}
\cA_{n}(\gamma) := \left\{\bx: \frac{1}{n} \log\frac{P_n(\bx)}{Q_{n}(\bx)} \geq \gamma \right\}.
\enq
From  \eqref{spectralsupmutual} it follows that there exists $\eta \in (0,1]$, such that 
\beq
\label{conssetasym}
\limsup_{n\to\infty}\Pr\{\cA_{n}(\gamma)\} = \eta. 
\enq
 Since $\Pr\{\cA_{n}(\gamma)\}+\Pr\{\cA^c_{n}(\gamma)\} =1$, for every $n \in \mathbb{N}$, we have 
 \beq
 \label{imp000}
 \liminf_{n\to\infty}\Pr\{\cA^c_{n}(\gamma)\} = 1-\eta.
 \enq
For every $\eps \in (0, \eta)$, let us define a sequence of positive functions $\{\phi_{n}\}_{n =1}^\infty$, such that for every $n \in \mathbb{N}$
\begin{align}
&\phi_{n}:\cX^n \to [0,1], \phi_{n}(\bx) \leq P_n(\bx), \forall \bx \in \cX^n \nonumber\\
\label{conssetasym1}
&\mbox{and} ~ \sum_{\bx \in \cX^n} \phi_{n}(\bx) \geq 1-\eps.
\end{align}
We now claim that for large enough $n$, $\mbox{Supp}(\phi_n) \cap \cA_{n}(\gamma) \neq \phi$. To prove this claim, suppose that $\mbox{Supp}(\phi_n) \cap \cA_{n}(\gamma) = \phi$. This would further imply that 
\begin{align}
\liminf_{n\to\infty}\sum_{\bx\in \cX^n}\phi_{n}(\bx)& \leq \liminf_{n\to\infty}\Pr\{\cA^c_{n}(\gamma)\} \nonumber\\
& \overset{a}= 1-\eta \nonumber\\
\label{contradiction}
& \overset{b}< 1-\eps,
\end{align}
where $a$ follows from \eqref{imp000} and $b$ follows because $\eps < \eta$. Notice that \eqref{contradiction} contradicts \eqref{conssetasym1}. 

 Thus for $n$ large enough,
\begin{align*}
 1-\eps & \leq\sum_{\bx \in \cA^c_{n}(\gamma)} \phi_{n}(\bx)+\sum_{\bx \in \cA_{n}(\gamma)}\phi_{n}(\bx) \nonumber\\
 & \leq \Pr\{\cA^c_{n}(\gamma)\} + \sum_{\bx \in \cA_{n}(\gamma)}\phi_{n}(\bx).
\end{align*}
By rearranging the terms in the above equation we get
\beq
\label{imp00}
1-\eps-\Pr\{\cA^c_{n}(\gamma)\} \leq \sum_{\bx \in \cA_{n}(\gamma)}\phi_{n}(\bx).
\enq
Taking $\limsup$ on both sides of \eqref{imp00}, we have
\begin{align}
\limsup_{n \to \infty}\sum_{\bx \in \cA_{n}(\gamma)}\phi_{n}(\bx)
 &\geq 1 -\eps - \liminf_{n\to\infty}\Pr\{\cA^c_{n}(\gamma)\}\nonumber\\
 \label{imp001}
 &\geq \eta - \eps.
\end{align}
\eqref{imp001} follows from \eqref{imp000}.
Now notice the following set of inequalities for large enough $n$
\begin{align}
1&\geq \sum_{\bx \in \cA_{n}(\gamma)}P_{n}(\bx)\nonumber\\ 
& \overset{a} \geq \sum_{\bx \in \cA_{n}(\gamma)}2^{n\gamma}Q_{n}(\bx)\nonumber\\ 
\label{rearrange1}
& \overset{b} \geq 2^{\left(n\gamma-\max_{\bx\in\cX^n}\log\frac{\phi_n(\bx)}{Q_{n}(\bx)}\right)}\sum_{\bx\in\cA_n(\gamma)}\phi_{n}(\bx)
\end{align}
where $a$ follows from \eqref{conssetasym0}; b follows from the fact that for every $\bx \in \cA^n(\gamma)$,
\beq
 \frac{\phi_n(\bx)}{Q_{n}(\bx)} \leq \max_{\bx\in\cA^n(\gamma)}\frac{\phi_n(\bx)}{Q_{n}(\bx)} \leq \max_{\bx\in\cX^n}\frac{\phi_n(\bx)}{Q_{n}(\bx)} \nonumber
 \enq
 By taking $\log$ on both sides of \eqref{rearrange1} and rearranging the terms we get 
 \beq
 \max_{\bx\in\cX^n}\frac{1}{n}\log\frac{\phi_n(\bx)}{Q_{n}(\bx)}\geq \gamma+\frac{1}{n}\log \sum_{\bx\in\cA_{n}(\gamma)}\phi_{n}(\bx). \nonumber\\
 \enq
 Taking $\limsup$ on both sides of the above equation we have
 \begin{align}
\limsup_{n \to \infty} \max_{\bx\in\cX^n}\frac{1}{n}\log\frac{\phi_n(\bx)}{Q_{n}(\bx)} & \geq \gamma+ \limsup_{n\to \infty}\frac{1}{n}\log \sum_{\bx\in\cA_{n}(\gamma)}\phi_{n}(\bx) \nonumber\\
\label{finsup}
& \geq \gamma
 \end{align}
 where \eqref{finsup} follows from \eqref{imp001}. Notice that \eqref{finsup} is true for every $\phi_n$ satisfying \eqref{conssetasym1}. Thus
\beq
\label{fin123}
\limsup_{n \to \infty} \frac{1}{n}D^{\eps}_{\infty}(P_n||Q_n) \geq \gamma.
\enq
Since \eqref{fin123} is true for every $\eps \in (0,\eta)$, the result will hold true for $\eps \downarrow 0$. 

\eqref{iid} easily follows from the law of large numbers and \eqref{noniid}. This completes the proof.
\end{proof}

\section{Source coding with coded state side information available at the decoder}
Let $(X^n, Y^n) \sim P_{X^nY^n}$, with range $(\cX\times\cY)^n$, where 
\beq
(X^n , Y^n) := [(X_1, Y_1), (X_2, Y_2),\dots, (X_n, Y_n)]. \nonumber
\enq
The $n$-shot source coding with coded side information available at the decoder is formulated as follows. We first define two sets of integers
\begin{align}
\label{c1}
\cM_n^{(1)} & = \{1,\dots, 2^{\ell_{\denc}^{\eps}(X^n)}\},\\
\label{c2}
\cM_n^{(2)} & = \{1,\dots, 2^{\ell_{\denc}^{\eps}(Y^n)}\}
\end{align} 
called the codes. Choose arbitrary mappings $e_n^{(1)}: \cX^n \to \cM_n^{(1)}$ (encoder$1$) and $e_n^{(2)}: \cY^n \to \cM_n^{(2)}$ (encoder$2$). We call
\begin{align*}
\frac{\ell_{\denc}^{\eps}(X^n)}{n} & = \frac{\log |\cM_n^{(1)}|}{n},\\
\frac{\ell_{\denc}^{\eps}(Y^n)}{n} & = \frac{\log |\cM_n^{(2)}|}{n}
\end{align*}
the coding rates of the encoder $1$ and encoder $2$, respectively. The decoder $d_n : \cM_n^{(1)} \times \cM_n^{(2)} \to \cX^n$ receives two outputs $e_n^{(1)}(\bx)$ and $e_n^{(2)}(\by)$ from the two encoders and tries to reconstruct the original source output $\bx$. Thus the probability of error for this task is defined as
\beq
P^n_{e} := \Pr\{X^n \neq \hX^n\}, \nonumber
\enq
where $\hX^n = d_n(e_n^{(1)}(X^n),e_n^{(2)}(Y^n))$. Note here that the encoders $e_n^{(1)}$ and $e_n^{(2)}$ do not cooperate with each other. We call the triplet $(e_n^{(1)},e_n^{(2)}, d_n)$ of two encoders and one decoder with the two codes in \eqref{c1} and \eqref{c2} and the error probability $\eps$ the $(n, 2^{\ell_{\denc}^{\eps}(X^n)},2^{\ell_{\denc}^{\eps}(Y^n)}, \eps)$ $n$-shot code. 

In this coding system we wish to minimize the two coding rates $\frac{\ell_{\denc}^{\eps}(X^n)}{n}$ and $\frac{\ell_{\denc}^{\eps}(Y^n)}{n}$ such that the probability of error is less than $\eps$.
\begin{definition}(One-shot  $\eps$ achievable rate pair) 
\label{oneshotrate} 
Let $(X,Y)\sim P_{XY}$, with range $\cX \times \cY$. A one-shot rate pair $(R_1,R_2)$ is called $\eps$ achievable if and only if there exists a $(1, 2^{\ell_{\denc}^{\eps}(X)},2^{\ell_{\denc}^{\eps}(Y)}, \eps)$ one-shot code such that  $\Pr\{X \neq \hX\} \leq \eps$, $\ell_{\denc}^{\eps}(X)\leq R_1$ and $\ell_{\denc}^{\eps}(Y) \leq R_2$.
\end{definition}

\begin{definition}(Asymptotically achievable rate pair)
\label{asymsiinf}
A rate pair $(R_1, R_2)$ is asymptotically achievable if and only if there exists $(n, 2^{\ell_{\denc}^{\eps}(X^n)},2^{\ell_{\denc}^{\eps}(Y^n)}, \eps)$ code such that  $\Pr\{X^n \neq \hat{X}^n\} \leq \eps$,
\beq
\lim_{\eps \to 0} \limsup_{n \to \infty}\frac{\ell_{\denc}^{\eps}(X^n)}{n}\leq R_1 \nonumber
 \enq
  and
 \beq
 \lim_{\eps\to 0}\limsup_{n \to \infty}\frac{\ell_{\denc}^{\eps}(Y^n)}{n}\leq R_2. \nonumber
\enq
\end{definition}

\label{sideinf}
\begin{theorem}
\label{oneshotsideinf}
Let $(X,Y) \sim P_{XY}$, with range $\cX\times\cY$. For the error $\eps\in(0,1)$. The following one-shot rate region for source coding of $X$ with a helper observing $Y$ is achievable 
\begin{align*}
\ell_{\denc}^{\eps}(X) & \geq H^{\eps_{11}}_{0}(X|U) - \log (\eps -\eps_1),\\
\label{swr2}
\ell_{\denc}^{\eps}(Y) & \geq D^{\eps_{11}}_{\infty}(P_{UY}||P_U\times P_Y)\\
&\hspace{5mm}+\log[-\ln (\eps_1-\eps_{11}-2\eps_{11}^{\frac{1}{2}}) ]
\end{align*}
for some conditional pmf $P_{U|Y}$, where $\eps_1 < \eps$ and $\eps_{11}$ is such that 
\beq
\label{condition123}
\eps_{11}+ 2\eps_{11}^{\frac{1}{2}} < \eps_1~\mbox{and}~D^{\eps_{11}}_{\infty}(P_{UY}||P_U\times P_Y) \geq 0.
\enq
\end{theorem}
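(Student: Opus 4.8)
The plan is a random-coding argument combining a Slepian--Wolf-type binning scheme at encoder~$1$ with a covering (rate-distortion-type) scheme at encoder~$2$. Fix the conditional pmf $P_{U|Y}$, put $P_{UXY}:=P_{XY}P_{U|Y}$ (so $U\rightarrow Y\rightarrow X$ is a Markov chain), and abbreviate $D:=D^{\eps_{11}}_{\infty}(P_{UY}||P_U\times P_Y)$. Split the allowed error as $\eps=\eps_1+(\eps-\eps_1)$, reserving $\eps_1$ for failures of the helper and $\eps-\eps_1$ for a binning collision. By Lemma~\ref{consopt} fix $\phi_{UY}\in\cB^{\eps_{11}}(P_{UY})$ attaining $D$, so that $\mbox{Supp}(\phi_{UY})=\mbox{Supp}(P_{UY})$ by the remark following Lemma~\ref{consopt}; by Definition~\ref{cond} fix $Q_{XU}\in\cB^{\eps_{11}}(P_{XU})$ attaining $H^{\eps_{11}}_{0}(X|U)$, and put $N:=\max_u|\mbox{Supp}(Q_{XU}(X|U=u))|=2^{H^{\eps_{11}}_{0}(X|U)}$. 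It suffices to exhibit a code whose two coding rates equal the right-hand sides in the theorem, since larger rates only make the encoders' tasks easier.

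\textbf{Encoder~$2$ (covering).} Put $\delta:=\eps_1-\eps_{11}-2\eps_{11}^{1/2}\in(0,1)$ (positive by \eqref{condition123}), and generate i.i.d.\ codewords $U(1),\dots,U(M_2)\sim P_U$ with $\log M_2=D+\log[-\ln\delta]$, that is $M_2=2^{D}(-\ln\delta)$. On input $y$ encoder~$2$ sends the least index $m_2$ with $\phi_{UY}(U(m_2),y)>0$ (or $m_2=1$ if no such index exists). Since $\phi_{UY}$ attains the smooth divergence, $\phi_{UY}(u,y)\le 2^{D}P_U(u)P_Y(y)$ for all $(u,y)$, so a single codeword passes the test at $y$ with probability at least $2^{-D}\beta(y)$, where $\beta(y):=\big(\sum_u\phi_{UY}(u,y)\big)/P_Y(y)\in[0,1]$ obeys $\sum_y P_Y(y)\beta(y)=\sum_{u,y}\phi_{UY}(u,y)\ge 1-\eps_{11}$. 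Because $2^{-D}M_2=-\ln\delta$, the inequalities $(1-t)^{M_2}\le e^{-tM_2}$ and $e^{(\ln\delta)t}\le 1-t(1-\delta)$ for $t\in[0,1]$ give that the probability (over $Y$ and the codebook) that no codeword passes is at most $\sum_y P_Y(y)\big(1-\beta(y)(1-\delta)\big)\le\eps_{11}+\delta$.

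\textbf{Encoder~$1$ (binning) and the decoder.} Draw $e^{(1)}:\cX\to\{1,\dots,M_1\}$ uniformly at random with $\log M_1=H^{\eps_{11}}_{0}(X|U)-\log(\eps-\eps_1)$, that is $M_1=N/(\eps-\eps_1)$; encoder~$1$ transmits $e^{(1)}(x)$. Given the received pair $(m_1,m_2)$, the decoder forms $\hat U=U(m_2)$ and outputs the unique $x$ with $e^{(1)}(x)=m_1$ and $x\in\mbox{Supp}(Q_{XU}(X|U=\hat U))$, declaring an error otherwise. A decoding error then forces one of: (i)~no codeword passed the covering test, of probability $\le\eps_{11}+\delta$; (ii)~covering succeeded but the realized $X$ does not lie in $\mbox{Supp}(Q_{XU}(X|U=\hat U))$; (iii)~some $x'\ne X$ in bin $m_1$ lies in $\mbox{Supp}(Q_{XU}(X|U=\hat U))$. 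Event (iii) has probability at most $N/M_1=\eps-\eps_1$ by a union bound over the random binning, $e^{(1)}$ being independent of $(X,Y,\hat U)$. For (ii), the Markov chain $U\rightarrow Y\rightarrow X$ forces $(\hat U,X)\in\mbox{Supp}(P_{XU})$ whenever $P_{U|Y}(\hat U|Y)>0$, so (ii) measures only the discrepancy between $P_{XU}$ and its smoothing $Q_{XU}$ transported along the coupling produced by the covering codebook; a Markov/total-variation estimate, using $\sum_{x,u}Q_{XU}(x,u)\ge 1-\eps_{11}$ and $\sum_y P_Y(y)\beta(y)\ge 1-\eps_{11}$, bounds it by $2\eps_{11}^{1/2}$. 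Summing, the average error is at most $(\eps_{11}+\delta)+2\eps_{11}^{1/2}+(\eps-\eps_1)=\eps$, precisely by the value of $\delta$; fixing a realization of the codebook and of $e^{(1)}$ attaining at most this average yields a deterministic one-shot code with $\ell_{\denc}^{\eps}(X)=\log M_1$, $\ell_{\denc}^{\eps}(Y)=\log M_2$ and $\Pr\{X\ne\hat X\}\le\eps$, which is the assertion.

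\textbf{The main obstacle.} Step (ii) is where I expect the real work to lie: the helper chooses $\hat U$ by a test involving only the pair $(\hat U,Y)$, while correct decoding needs $\hat U$ to be jointly ``typical'' with the \emph{realized} value of $X$ in the sense of $Q_{XU}$, and reconciling the two is the one-shot counterpart of the Markov lemma. It is there that the common smoothing parameter $\eps_{11}$ for the divergence and for the conditional entropy is used and the cushion $2\eps_{11}^{1/2}$ of \eqref{condition123} is consumed, whereas the geometric residual $\delta$ of the covering step is what produces the extra term $\log[-\ln\delta]$ in the helper's rate.
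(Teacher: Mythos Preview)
Your overall architecture --- covering at the helper, random binning at encoder~$1$, and the three-way error split into (i) covering failure, (ii) the Markov-lemma event, (iii) bin collision --- matches the paper exactly, and your treatment of (i) and (iii) is essentially the paper's. The gap is precisely the step you flag: with the covering test you propose, (ii) cannot be controlled.

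Your helper accepts $U(m_2)$ whenever $\phi_{UY}(U(m_2),y)>0$; by the remark after Lemma~\ref{consopt} this is the same as $P_{UY}(U(m_2),y)>0$. Hence, given $Y=y$, the selected $\hat U$ has law $P_U(\cdot)$ restricted and renormalised to $\mbox{Supp}(P_{U|Y}(\cdot|y))$, which is in general far from $P_{U|Y}(\cdot|y)$. Concretely, take $Y$ uniform on $\{a,b\}$, $U\in\{0,1\}$ with $P_{U|Y}(0|a)=1$ and $P_{U|Y}(0|b)=\alpha$ small; then $P_U(0)\approx\tfrac12$, so given $Y=b$ your rule outputs $\hat U=0$ with probability about $\tfrac12$, whereas $P_{U|Y}(0|b)=\alpha$. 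If $g(0,b):=\Pr\{(X,0)\notin\mbox{Supp}(Q_{XU})\mid Y=b\}$ is close to $1$, your event (ii) has probability about $\tfrac14$, while $\mathbb{E}_{P_{UY}}[g(U,Y)]\le\eps_{11}$ is still satisfied because the bad pair $(0,b)$ carries only $P_{UY}$-mass $\alpha/2$. The two inequalities you cite, $\sum Q_{XU}\ge 1-\eps_{11}$ and $\sum_y P_Y(y)\beta(y)\ge 1-\eps_{11}$, do not see this mismatch and cannot rescue the bound.

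The paper resolves exactly this by changing the helper's acceptance rule so that it already encodes the decoder's need. Define $g(u,y):=\sum_{x}P_{X|Y}(x|y)\,\mathbf{1}\{(x,u)\notin\mbox{Supp}(Q_{XU})\}$ and let the covering set be $\mathcal{F}:=\{(u,y):g(u,y)\le\eps_{11}^{1/2}\}$; encoder~$2$ sends the first $k$ with $(U(k),y)\in\mathcal{F}$. Then on the event $\{(\hat U,Y)\in\mathcal{F}\}$ one has, deterministically, $\Pr\{(X,\hat U)\notin\mbox{Supp}(Q_{XU})\mid Y,\hat U\}\le\eps_{11}^{1/2}$, which is your (ii). The price is that (i) must now absorb $\Pr\{(U,Y)\notin\mathcal{F}\}$; but $\mathbb{E}_{P_{UY}}[g(U,Y)]=\Pr_{P_{XU}}\{(X,U)\notin\mbox{Supp}(Q_{XU})\}\le\eps_{11}$, so Markov's inequality gives $\Pr\{(U,Y)\notin\mathcal{F}\}\le\eps_{11}^{1/2}$, and the covering bound you already proved (with $\mathcal{F}$ in place of $\mbox{Supp}(\phi_{UY})$, using $\phi_{UY}(u,y)\le 2^{D}P_U(u)P_Y(y)$ exactly as you did) yields $\Pr\{E_1\}\le\eps_{11}+\eps_{11}^{1/2}+\delta$. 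Summing gives $\eps_{11}+2\eps_{11}^{1/2}+\delta+(\eps-\eps_1)=\eps$. In short, the ``one-shot Markov lemma'' you anticipated is implemented not as an a posteriori estimate on (ii) but by baking the $X$-typicality requirement into the helper's test via $g$ and $\mathcal{F}$.
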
\begin{proof}
The techniques used in the proof here are motivated from \cite[Lemma 4.3]{{wyner-1975}}. Fix a conditional probability mass function $P_{U|Y}$ and let $P_{U}(u) = \sum_{y\in \cY}P_{Y}(y)P_{U|Y}(u|y)$. Choose $\eps_{11}$ such that the conditions in \eqref{condition123} are satisfied. Notice that such a choice of $\eps_{11}$ always exists because $D^{\eps_{11}}_{\infty}(P_{UY}||P_U\times P_Y)$ is a decreasing function of $\eps_{11}$.
Let $Q \in \cB^{\eps_{11}}(P_{UX})$ and  $\phi \in \cB^{\eps_{11}}(P_{UY})$ be such that 
\begin{align}
\label{consm786}
H^{\eps_{11}}_{0}&(X|U)= \log \max_{u \in \cal{U}} |\mbox{Supp}(Q(X|U=u))|
\end{align}
and
\begin{align}
\label{smcon12}
D^{\eps_{11}}_{\infty}&(P_{UY}||P_U\times P_Y) \nonumber\\
& = \log \max_{(u,y) : P_{UY}(u,y)>0}\log\frac{\phi(u,y)}{P_{U}(u)P_{Y}(y)},
\end{align}
where
\beq
\label{phidef2}
\phi(U=u|Y=y) :=
\begin{cases}
\frac{\phi(u,y)}{P_{Y}(y)} &\mbox{if } P_{Y}(y)>0,\\ 
0        & \mbox{otherwise}.
\end{cases}
\enq
Notice that the triplet $(X,Y,U)$ satisfy the following
\beq
\label{ref}
X \rightarrow Y \rightarrow U.
\enq
For more details on \eqref{ref} see \cite[equation 4.4]{wyner-1975}. For every $(u,y) \in \mathcal{U}\times \cY$, let $g$ be a mapping such that
\beq
\label{fundef}
g(u,y):= \sum_{x\in\cX}P_{X|Y}(x|y){\bf{I}}(x,u),
\enq
where ${\bf{I}}(x,u)$ for every $(x,u) \in \cX \times  \mathcal{U}$ is defined as follows
\beq
{\bf{I}}(x,u) =
\begin{cases}
1 & \mbox{if } (x,u) \notin \mbox{Supp}(Q) , \\
0        & \mbox{otherwise.}
\end{cases}
\enq
Define the following set
\beq
\label{consset2345}
\mathcal{F} := \left\{(u,y)\in \mathcal{U} \times \cY : g(u,y) \leq \eps_{11}^{\frac{1}{2}} \right\}.
\enq

{{\bf{Random code generation:}}  Randomly and independently assign  an index $i \in [1: 2^{\ell_{\denc}^{\eps}(X)}]$ to every realization $x \in \cX$. The realizations with the same index $i$ form a bin $\cB (i)$. Randomly and independently generate $2^{\ell_{\denc}^{\eps}(Y)}$ realizations $u(k)$, $k \in [1:2^{\ell_{\denc}^{\eps}(Y)}]$, each according to $P_U$.}

{\bf{Encoding:}} If the encoder $1$ observes a realization $x \in \cB (i)$, then the encoder $1$ transmits $i$. For every realization $y \in \cY$ the encoder $2$ finds an index $k$ such that $(u(k),y) \in \mathcal{F}$. For the case when there are more than one such index, it sends the smallest one among them. If there is none, it then sends $k =1$.

{\bf{Decoding:}} The receiver finds the unique $x^{\prime} \in \cB(i)$ such that $(x^{\prime},u(k)) \in \mbox{Supp}(Q)$.

{\bf{Probability of error analysis:}} Let $M_1$ and $M_2$ be the chosen indices for encoding $X$ and $Y$. The error in the above mentioned encoding decoding strategy occurs if and only if one or more of the following error events occur
\begin{align*}
E_{1} &= \left\{(U(m_2),Y) \notin \mathcal{F},~ \forall m_2 \in \left[1:2^{\ell_{\denc}^{\eps}(Y)}\right]\right\},\\
E_2 &= \left\{(X, U(M_2)) \notin \mbox{Supp}(Q) \right\},\\
E_3 &=  \left\{\exists x^\prime \in \cB(m_1): (x^\prime, U(M_2)) \in \mbox{Supp}(Q), x^\prime \neq X\right\}.
\end{align*}
For more details on error events see \cite[Lemma 4]{Kuzuoka-2012}. The probability of error is upper bounded as follows
\beq
\label{errana123}
\Pr\{E\} \leq \Pr\{E_1\} + \Pr\{E^c_1 \cap E_2\} + \Pr\{E_3|X\in\cB(1)\}. 
\enq
We now calculate $\Pr\{E_1\}$ as follows
\begin{align}
&\Pr\{E_1\} \nonumber\\
&= \sum_{y \in \cY}P_{Y}(y)\Pr\bigg\{(U(m_2),y) \notin \mathcal{F}, \forall m_2 \in \left[1:2^{\ell_{\denc}^{\eps}(Y)}\right]\bigg\}\nonumber\\
&\overset{a}= \sum_{y \in \cY}P_{Y}(y) \left(1-\sum_{u:(u,y) \in \mathcal{F}}P_{U}(u)\right)^{2^{\ell_{\denc}^{\eps}(Y)}}\nonumber\\
&\overset{b} \leq\sum_{y \in \cY}P_{Y}(y) \bigg(1-2^{-D^{\eps_{11}}_{\infty}(P_{UY}||P_{U}\times P_{Y})}\nonumber\\
&\hspace{20mm}\sum_{u:(u,y) \in \mathcal{F}}\phi(U=u|Y=y)\bigg)^{2^{\ell_{\denc}^{\eps}(Y)}}\nonumber\\
& \overset{c} \leq 1-\sum_{(u,y) \in \mathcal{F}} \phi(u,y) + e^{-2^{\ell_{\denc}^{\eps}(Y)}2^{-D^{\eps_{11}}_{\infty}(P_{UY}||P_U\times P_Y)}} \nonumber \\
& \overset{d} \leq \eps_{11} + \Pr\{(U,Y) \notin \mathcal{F}\} + e^{-2^{\ell_{\denc}^{\eps}(Y)}2^{-D^{\eps_{11}}_{\infty}(P_{UY}||P_U\times P_Y)}}, \nonumber
\end{align}
where $a$ follows because $U(1),\dots,U(2^{\ell_{\denc}^{\eps}(Y)})$ are independent and subject to identical distribution $P_U$; $b$ follows from \eqref{equalsupp}, \eqref{smcon12} and \eqref{phidef2}; $c$ follows from the inequalities $(1-x)^y \leq e^{-xy} ~ (0 \leq x \leq 1, y\geq 0)$ and $e^{-xy} \leq 1-y+x ~(x\geq 0, 0\leq y \leq 1)$ and \eqref{phidef2}; $d$ is true because of the following arguments
 \begin{align}
 1-\eps_{11} & \overset {a}\leq \sum_{(u,y) \in \mathcal{U} \times \cY} \phi(u,y) \nonumber\\
 & = \sum_{(u,y) \in \mathcal{F}^c} \phi(u,y)+  \sum_{(u,y) \in \mathcal{F}}\phi(u,y)\nonumber\\
 & \overset{b} \leq \Pr\{\mathcal{F}^c\} +  \sum_{(u,y) \in \mathcal{F}}\phi(u,y)\nonumber\\
 \label{rearrange4}
 &  \leq \Pr\{(U,Y) \notin \mathcal{F}\} +\sum_{(u,y) \in \mathcal{F}}\phi(u,y),
 \end{align}
 where $a$ and $b$ both follow from the fact that $\phi(u,y)
  \in \cB^{\eps_{11}}(P_{UY})$. By rearranging the terms in \eqref{rearrange4} we get
 \beq
 \label{errb12}
 1- \sum_{(u,y) \in \mathcal{F}}\phi(u,y) \leq \eps_{11}+\Pr\{(U,Y) \notin \mathcal{F}\}. 
 \enq
We now calculate $\Pr\{(U,Y)\notin\mathcal{F}\}$ as follows 
\begin{align}
\Pr\{(U,Y) \notin \mathcal{F}\} &= \Pr\{g(U,Y) \geq \eps_{11}^{\frac{1}{2}}\} \nonumber\\
&\overset{a} \leq \eps^{-\frac{1}{2}}_{11} \mathbb{E}_{UY}(g(U,Y)) \nonumber\\
& \leq \eps_{11}^{-\frac{1}{2}} \sum_{(u,y) \in \mathcal{U }\times \cY}P_{UY}(u,y) g(u,y) \nonumber\\
&\overset{b} = \eps_{11}^{-\frac{1}{2}} \sum_{(u,y) \in \mathcal{U}\times \cY}P_{UY}(u,y) \nonumber\\
& \hspace{10mm}\sum_{x\in\cX}P_{X|Y}(x|y){\bf{I}}(x,u) \nonumber\\
& = \eps^{-\frac{1}{2}}_{11}\sum_{\substack{(x,u,y) \in \cX\times \mathcal{U }\times \cY\\ (u,x) \notin \mbox{Supp}(Q)}}P_{XUY}(x,u,y) \nonumber\\
& = \eps^{-\frac{1}{2}}_{11} \sum_{(u,x) \notin \mbox{Supp}(Q)}P_{UX}(u,x) \nonumber\\
\label{finc3}
&\overset{c} \leq \eps^{\frac{1}{2}}_{11},
\end{align}
where $a$ follows from Markov's inequality; $b$ follows from \eqref{fundef} and $c$ follows because of the following arguments
\begin{align}
1-\eps_{11} & \overset {a}\leq \sum_{(u,x)\in \mathcal{U} \times \cX}Q(u,x)\nonumber\\
\label{rearrang5}
&\overset{b}\leq \sum_{(u,x) \in \mbox{Supp}(Q)}P_{UX}(u,x),
\end{align}
where $a$ follows from \eqref{consm786} and $b$ follows because $Q(u,x)\leq P_{UX}(u,x)$, for every $(u,x) \in \mathcal{U} \times \cX$. By rearranging the terms in \eqref{rearrang5} we get
\beq
1-\sum_{(u,x) \in \mbox{Supp}(Q)}P_{UX}(u,x) \leq \eps_{11}. \nonumber
\enq
The second term in \eqref{errana123} is calculated as follows
\begin{align}
\Pr&\{E^c_1\cap E_2\} \nonumber\\
&= \sum_{\substack{(x,u,y)\in \cX\times\cY\times\mathcal{U}\\(u,y) \in \mathcal{F}, (u,x)\notin \mbox{Supp}(Q)}}P_{XUY}(x,u,y) \nonumber\\
&=\sum_{(u,y) \in \mathcal{F}}P_{UY}(u,y)\sum_{x: (x,u) \notin  \mbox{Supp}(Q)}P_{X|Y}(x|y) \nonumber\\
\label{finc2}
&\leq \eps_{11}^{\frac{1}{2}},
\end{align}
where the last inequality follows from \eqref{consset2345}. From \eqref{errb12}, \eqref{finc3} and \eqref{finc2}  it follows that 
\begin{align}
\Pr\{&E_1\}+\Pr\{E^c_1 \cap E_2\} \nonumber\\
&\leq \eps_{11} + 2\eps_{11}^{\frac{1}{2}}+ e^{-2^{\ell_{\denc}^{\eps}(Y)}2^{-D^{\eps_1}_{\infty}(P_{UY}||P_U\times P_Y)}}. \nonumber
\end{align}
Let 
\begin{align}
\label{finc4}
\eps_1 &\geq \eps_{11} + 2\eps_{11}^{\frac{1}{2}}+e^{-2^{\ell_{\denc}^{\eps}(Y)}2^{-D^{\eps_1}_{\infty}(P_{UY}||P_U\times P_Y)}}.
\end{align}
It now easily follows that
\begin{align}
\ell_{\denc}^{\eps}(Y) & \geq \log[-\ln (\eps_1-\eps_{11}-2\eps_{11}^{\frac{1}{2}}) ] \nonumber\\
&\hspace{5mm}+D^{\eps_{11}}_{\infty}(P_{UY}||P_U\times P_Y). \nonumber
\end{align}
Finally, the third term in \eqref{errana123} is calculated as follows 
\begin{align}
\Pr&\{E_3\} = \Pr\{E_3|\cB(1)\}\nonumber\\
& = \sum_{(x,u) \in \cX \times \mathcal{U}} \Pr\big\{(X,U) = (x,u)|X \in \cB(1)\big\}\nonumber\\ 
 &\hspace{10mm} \Pr \bigg\{\exists x^\prime \neq x : x^\prime \in \cB(1)\mbox{and} ~Q(x^\prime,u)>0\nonumber \\
 &\hspace{18mm}\big| x \in \cB(1),(X,U) = (x,u)\bigg\} \nonumber\\
& \leq \sum_{(x,u) \in \cX \times \mathcal{U}}P_{XU}(x,u) \sum_{\substack {x^\prime \neq x \\ Q(x^\prime,u) > 0}}\Pr \{x^\prime \in \cB(1)\} \nonumber\\
&\leq  2^{-\ell_{\denc}^{\eps}(X)}\sum_{(x,u) \in \cX \times \mathcal{U}}P_{XU}(x,u)  \max_{u \in \mathcal{U}} \sum_{x:Q(x,u)>0} 1 \nonumber\\
& \overset{a} = 2^{-\ell_{\denc}^{\eps}(X)}\nonumber\\
\label{sw123}
&\hspace{5mm}\sum_{(x,u) \in \cX \times \mathcal{U}}P_{XU}(x,u) \max_{u \in \mathcal{U}} |\mbox{Supp}(Q(X|U=u))|\\
& = 2^{-\ell_{\denc}^{\eps}(X)}\max_{u \in \mathcal{U}} |\mbox{Supp}(Q(X|U=u))|, \nonumber
\end{align}
where $a$ follows because $Q(X=x|U=u): = \frac{Q(x,u)}{P_U(u)}$ and $Q(X=x|U=u): =0$ if $P_{U}(u) = 0$.
Thus from \eqref{finc4} and \eqref{sw123} it follows that
\beq
\Pr\{E\} \leq \eps_1 + 2^{-\ell_{\denc}^{\eps}(X)}\max_{u \in \mathcal{U}} |\mbox{Supp}(Q(X|U=u))|. \nonumber
\enq
Let
\beq
\eps_1 + 2^{-\ell_{\denc}^{\eps}(X)}\max_{u \in \mathcal{U}} |\mbox{Supp}(Q(X|U=u))| \leq \eps. \nonumber
\enq
It now easily follows that 
\beq
\ell_{\denc}^{\eps}(X) \geq H^{\eps_{11}}_{0}(X|U) - \log (\eps-\eps_1). \nonumber
\enq
This completes the proof.
\end{proof}

The asymptotic optimality of the rate region obtained in Theorem \ref{oneshotsideinf} is an immediate consequence of Definition \ref{asymsiinf}, Lemma \ref {assymcond} and \ref{asyminfinityrate}.

\section{Conclusion and Acknowledgements}
We proved that smooth max divergence and smooth max conditional \renyi entropy can be used to obtain one-shot achievable rate region for source coding when coded side information is available at the decoder. Furthermore, we showed that asymptotically this region coincides with the rate region as derived by Wyner in \cite{wyner-1975}.

The author gratefully acknowledges the helpful discussions with Mohit Garg and Sarat Moka.

\bibliographystyle{ieeetr}
\bibliography{master}

\end{document}